\documentclass{xarticle}
\usepackage[rgb]{xcolor}
\usepackage{amsfonts,amssymb,amsmath,amsthm}
\usepackage{graphicx}
\usepackage{overpic}
\usepackage[shortlabels]{enumitem}
\usepackage[margin=10pt,font=small,skip=7pt]{caption}
\usepackage[numbers,sort]{natbib}
\usepackage{xspace}
\usepackage{stmaryrd}
\usepackage{trimclip}

\def\clap#1{\hbox to 0pt{\hss#1\hss}}

\newcommand{\R}{\mathbb{R}}
\newtheorem{thm}{Theorem}

\newtheorem{lemma}[thm]{Lemma}
\newtheorem{defn}[thm]{Definition}
\def\qedsym{\rule[0pt]{5pt}{5pt}\par\medskip}
\renewenvironment{proof}{{\noindent\bf Proof. }}{ \hfill ~\qedsym}
\newenvironment{proofnosymbol}{{\noindent\bf Proof. }}{\ignorespacesafterend}
\let\mathopfont=\mathrm

\newcommand{\ie}{{\it i.e.}}

\newcommand{\range}{\mathop{\mathopfont{range}}}
\newcommand{\abs}[1]{\lvert{#1}\rvert}
\newcommand{\eemph}[1]{\textbf{\textit{#1}}}
\def\tp{\mathsf{T}}
\newcommand{\bmat}[1]{\begin{bmatrix}#1\end{bmatrix}}
\def\one{\mathbf{1}}
\let\bl\bigl
\let\br\bigr

\newbox\vcbox
\def\vcent#1{\setbox\vcbox\hbox{#1}\raise -0.5\ht\vcbox\hbox{#1}}

\setlength{\bibsep}{1.1pt plus 1pt}
\def\wu{\omega^\text{u}}
\def\prerefspace{\vspace*{-0.5ex}}
\def\precapspace{\captionsetup{margin={31pt,12pt},font=small,skip=5pt}}
\def\ugn{\lambda}
\newcommand{\bittide}{bittide\xspace}
\def\ss{{\text{ss}}}
\makeatletter
\DeclareRobustCommand{\shortto}{\mathrel{\mathpalette\short@to\relax}}
\newcommand{\short@to}[2]{\mkern2mu
  \clipbox{{.5\width} 0 0 0}{$\m@th#1\vphantom{+}{\shortrightarrow}$}}
\makeatother

\def\jtoi{{j \shortto i}}

\begin{document}

\title{On Buffer Centering for Bittide Synchronization}

\author{Sanjay Lall\footnotesymbol{1}
  \and  C\u{a}lin Ca\c{s}caval\footnotesymbol{2}
  \and  Martin Izzard\footnotesymbol{2}
  \and  Tammo Spalink\footnotesymbol{2}}

\note{Preprint}

\maketitle

\makefootnote{1}{S. Lall is with the Department of Electrical
  Engineering at Stanford University, Stanford, CA 94305, USA, and is
  a Visiting Researcher at Google.
  \texttt{lall@stanford.edu}\medskip}

\makefootnote{2}{C\u{a}lin Ca\c{s}caval, Martin Izzard, and Tammo Spalink are
  with Google.}

\begin{abstract}

  We discuss distributed reframing control of \bittide systems. In a
  \bittide system, multiple processors synchronize by monitoring
  communication over the network. The processors remain in logical
  synchrony by controlling the timing of frame transmissions. The
  protocol for doing this relies upon an underlying dynamic control
  system, where each node makes only local observations and performs
  no direct coordination with other nodes.  In this paper we develop a
  control algorithm based on the idea of reset control, which allows
  all nodes to maintain small buffer offsets while also requiring very
  little state information at each node.  We demonstrate that with
  reframing, we can achieve separate control of frequency and phase,
  allowing both the frequencies to be syntonized and the buffers to be
  moved the desired points, rather than combining their control via a
  proportional-integral controller. This offers the potential for
  simplified boot processes and failure handling.
\end{abstract}

\section{Introduction}

The Google \bittide system is designed to enable synchronous execution
at large scale without the need for a global clock. Synchronous
communication and processing offers significant benefits for
determinism, performance and utilization, and through simplification,
robustness. Synchronous execution is used successfully in real time
systems~\cite{benveniste_synchronous_2003}.

In \bittide, synchronization is decentralized, as every node in the
system adjusts its frequency based on the observed communication
exchanges with its neighbors. This mechanism is a distributed dynamic
feedback control system. The \bittide system was first proposed
in~\cite{spalink_2006}. It defines a synchronous logical clock that is
resilient to variations in physical clock frequencies.  In~\cite{bms,
  res} we discuss a model for the dynamics, the {\em Abstract Frame
  Model} (AFM) and its implication for controlling node
frequencies. This work makes use of the model and ideas developed in
that work. In this paper we focus on one critical aspect of the
design, that is control of the buffers that \bittide links use to smooth out
frequency variations.

\font\fivefont=cmss10 at 5pt
\def\doublefont{\fontencoding{T1}\fontfamily{cmss}\fontsize{11pt}{13}\selectfont}
\fboxsep 0.5pt
\def\fivept{\fivefont}
\begin{figure}
  \label{fig:buffers}
  \hbox to \linewidth{\hss\begin{overpic}[abs,width=80mm,unit=2mm]{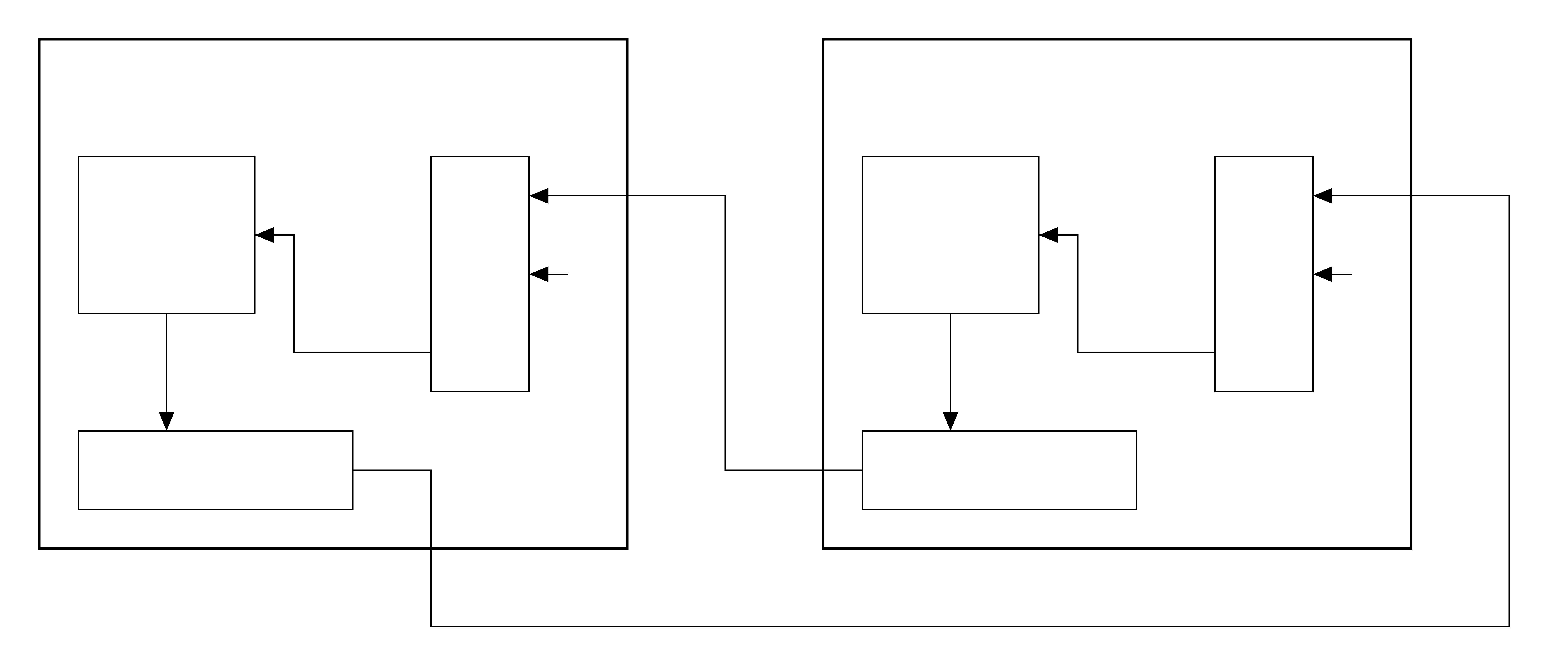}
      \put(5.5,4.8){\clap{\fivept send memory}}
      \put(4.25, 11){\vcent{\clap{\fivept processor}}}
      \put(12.25, 10){\clap{\vcent{\rotatebox{90}{\fivept receive buffer}}}}
      \put(14.6, 9.8){\colorbox{white}{\fivept midpoint}}
      \put(10.7, 8.2){\llap{\fivept read ptr}}
      \put(14.2, 12.4){\colorbox{white}{\fivept write ptr}}
      \put(2, 15){\scalebox{0.5}{{\doublefont node} $i$}}
      \put(25.5,4.8){\clap{\fivept send memory}}
      \put(24.25, 11){\vcent{\clap{\fivept processor}}}
      \put(32.25, 10){\clap{\vcent{\rotatebox{90}{\fivept receive buffer}}}}
      \put(34.6, 9.8){\colorbox{white}{\fivept midpoint}}
      \put(30.7, 8.2){\llap{\fivept read ptr}}
      \put(34.2, 12.4){\colorbox{white}{\fivept write ptr}}
      \put(22, 15){\scalebox{0.5}{{\doublefont node} $j$}}
      \put(16.5, 1.3){\scalebox{0.5}{{\doublefont link} $i \to j$}}
      \put(17.5, 4.9){\rotatebox{90}{\scalebox{0.5}{{\doublefont link} $j \to i$}}}
    \end{overpic}\hss}
  \caption{Two nodes showing the receive buffers as part of the links.}
\end{figure}

The \bittide system relies on buffering frames on the receiving side
of a link to absorb frequency oscillations in both directions. These
buffers have two primary desirable properties.  First, the control
system must avoid overflow or underflow.  Second, buffer size should be
minimized, since buffer memory is system overhead.  An initial transient
period ensures read and write pointers are at buffer midpoints,
after which their maximum needed size is determined by how much the
control system can control variation without overflow or underflow.

Processes on \bittide coordinate using ahead-of-time scheduling, and so
frame retransmissions are not allowed. Therefore, the buffers need to ensure
that no frames are lost.  Note that this does not imply that links
in the system cannot fail, simply that failures are dealt with at a
different level. For the purpose of frame delivery, buffers should
be considered reliable.

To satisfy these properties, we design a dynamic control system that
switches modes after the initial transitory period. We trade-off
convergence speed by using a proportional controller that allows
buffer occupancy to stabilize. Buffers use
virtual pointers (\ie, frame counters) to provide the correct input to the
controller. After the controller converges, the proportional
controller is replaced with a proportional-plus-offset controller to
drive the buffer occupancy to the midpoint, and retain the same
converged frequency.

\section{Prior work}

Distributed systems have long striven to achieve synchrony at some
level. This has been mostly effected by mechanisms that align local
clocks to a global master such as a UTC server. The most common are
the NTP~\cite{ntp} and PTP~\cite{ptp} standards.  Telecom systems like
SONET~\cite{sonet} and some packet networking equipment using
SyncE~\cite{syncE} apply a further level of effort to aid synchrony,
specifically, they work to syntonize local time-reference oscillators
in hierarchical manner. Syntony here is used to mean that all
reference oscillators in the distributed system maintain the same
frequency on average over time. There are well-known methods to
achieve this, mostly by clock extraction from upstream communication
links. Having syntonized oscillators eases the task of aligning local
clocks to a master.

The \bittide system also uses syntonization, although there is no
inherent need for hierarchy, and instead \bittide adds precise phase
control at the lowest level.  This phase control is not quite phase
alignment across the distributed system because there will be temporal
{\em wobble}. The \bittide system uses small elastic buffers to absorb
this wobble.  This opens the possibility of very precise coordination
across a \bittide distributed system, coordination equivalent to what
is possible in a traditional synchronous system.

A model for the dynamics of a bittide system was developed
in~\cite{bms, res}.  In this paper we focus on a simplified linear
version of this model.  Many similar linear consensus models have been
extensively studied; see~\cite{dorfler} for a survey focused
on synchronization, and see~\cite{moreau2004} for a discussion
of stability analysis.

This paper develops an approach to control of \bittide systems we call
\emph{reframing} of a \bittide system. This is related to the idea of
\emph{reset} which has a long history in control. Before the
development of integral control, in order to ensure that a system
achieved a small steady-state error, the offset (or \emph{reset})
parameter in a proportional controller was adjusted manually.
Integral control was developed as a way of automatically performing
such resets~\cite{astrom}. Here, we use a variant of the reset idea
for a \emph{distributed} system, to simultaneously control \emph{two}
quantities per node, frequency and buffer occupancy. Because the
system being controlled is a computer network, it has sufficiently
ideal properties (such as conservation of frames~\cite{bms}) that we
can perform a reset exactly once, at bootup. The idea of reset
is attributed~\cite{auslander} to Mason in the~1930s.

\section{Notation and preliminaries}

We represent the \bittide topology as a directed graph with $n$ nodes
and $m$ edges. Define the source incidence matrix $S \in\R^{n \times
  m}$ by
\[
S_{ie} = \begin{cases}
  1 & \text{if  node $i$ is the source of edge $e$} \\
  0 &\text{otherwise}
\end{cases}
\]
and the destination incidence matrix $D \in\R^{n \times m}$ by
\[
D_{ie} = \begin{cases}
  1 & \text{if node $i$ is the destination of edge $e$} \\
  0 &\text{otherwise}
\end{cases}
\]
The usual incidence matrix of the graph is then $B = S - D$.  Let
$\one$ be the vector of all ones, then $B^\tp \one = 0$.

A directed graph is called \eemph{strongly connected} or
\eemph{irreducible} if for every $i,j$ there exists directed paths $i
\to j$ and $j \to i$. Suppose $A\in\R^{n\times n}$ is a nonnegative
matrix such that $A_{ij} > 0$ if there is an edge $i \to j$ and
$A_{ij}=0$ otherwise. The matrix $A$ is called irreducible if the
corresponding graph is irreducible. Note that this does not depend on
the diagonal elements of $A$.

A matrix $Q\in\R^{n\times n}$ is called \eemph{Metzler} if $Q_{ij}\geq
0$ for all $i \neq j$. A Metzler matrix $Q$ is called a \eemph{rate
  matrix} if its rows sum to zero. If $Q$ is Metzler and irreducible,
then there is a real eigenvalue $\lambda_\text{metzler}$, with
positive left and right eigenvectors.  All other eigenvalues $\lambda$
satisfy $\Re(\lambda) < \lambda_\text{metzler}$.

A Metzler matrix $Q$ has a nonnegative matrix exponential.  To see
this, let $s >0$ be such that $sI + Q \geq 0$. Then $e^{Q} =
e^{-sI}e^{sI + Q}$ and both terms on the RHS are elementwise
nonnegative.  For the special case of a rate matrix $Q$,
since $Q \one= 0$ we have directly $e^Q \one = \one$ and so $e^Q$ is a stochastic
matrix.

\section{The \bittide control system}

A detailed model of the \bittide system, called the \emph{abstract
frame model} is developed in~\cite{bms}. A simplified
finite-dimensional linear time-invariant model was presented
in~\cite{res}. In both papers, we considered the special case of a system
where all links are bidirectional. We now
update the model to include unidirectional links. We focus on the case
where the controller is continuous-time, latencies are small, and
measurements are unquantized. The effectiveness of this approximation
was investigated in~\cite{bms}, so we do not dwell here on this issue.

The simplified fundamental dynamics of the \bittide system are as follows.
\begin{align}
  \label{eqn::phase}
  \dot\theta_i(t) &= \omega_i(t) \\
  \label{eqn:occ}
  \beta_\jtoi(t) &= \theta_j(t) - \theta_i(t) + \ugn_\jtoi\\
  \omega_i(t) &= \wu_i + c_i(t)
\end{align}
Here $i,j\in 1,\dots,n$ index nodes in the graph, and $j\to i$ refers
to an edge from $j$ to $i$. The variable $\theta_i$ is the \emph{clock
phase} at node $i$, whose time-derivative $\omega_i(t)$ is the clock
frequency. The clock frequency is the sum of two terms, the first is
the constant $\wu_i$, which is the \emph{uncontrolled frequency} of
the clock. It is unknown, and not available to the \bittide control
system. The second term is $c_i$, the \emph{frequency correction},
which is the control input; it is chosen by the controller at
node~$i$. Equation~\eqref{eqn:occ} gives $\beta_\jtoi$, the occupancy
of the elastic buffer at node $i$ associated with the edge $j \to
i$. The quantity $\ugn_\jtoi$ is a constant associated with the link.

Control for the \bittide system is inherently distributed and as such,
does not have access to global information. At each
node $i$, the controller measures all of the occupancies $\beta_\jtoi$
for all of the incoming links. It cannot observe the occupancies at
other nodes, nor can it observe~$t$ or $\dot\theta_i$.  Using this limited
information, it chooses the frequency correction $c_i$. Dynamic
controllers cannot be implemented exactly, since the controller does
not know the time $t$, and so, for example, the controller cannot
exactly perform any integration or differentiation. The only clock
information available at node $i$ is the clock phase~$\theta_i$, and
in general this varies from one node to the next. This means that we
are desirous of implementing a purely static controller, such as a
proportional (plus-offset) controller. One form, which has been studied
in~\cite{res}, is
\begin{equation}
  \label{eqn:pcontroller}
  c_i(t) = k \sum_{j \mid \jtoi} (\beta_\jtoi - \beta_i^\text{off}) + q_i(t)
\end{equation}
The buffer occupancy $\beta_\jtoi$ is measured relative to an offset
$\beta^\text{off}_i$, corresponding to the desired equilibrium buffer
occupancy. The difference $\beta_\jtoi -\beta^\text{off}_i$ is called
the \emph{relative} buffer occupancy.  The controller chooses the
correction to be proportional to the sum of the relative buffer
occupancies at the node, plus a constant frequency offset~$q_i$. The
controller parameter $k$ could in principle depend on the node $i$,
but for simplicity and scalability we do not consider that case. As we
discuss in this paper, the frequency offset $q_i(t)$ may vary with
both time and node.

\begin{figure}[ht!]
  \centerline{\scalebox{0.75}{\begin{overpic}[abs,width=72mm,unit=0.8mm]{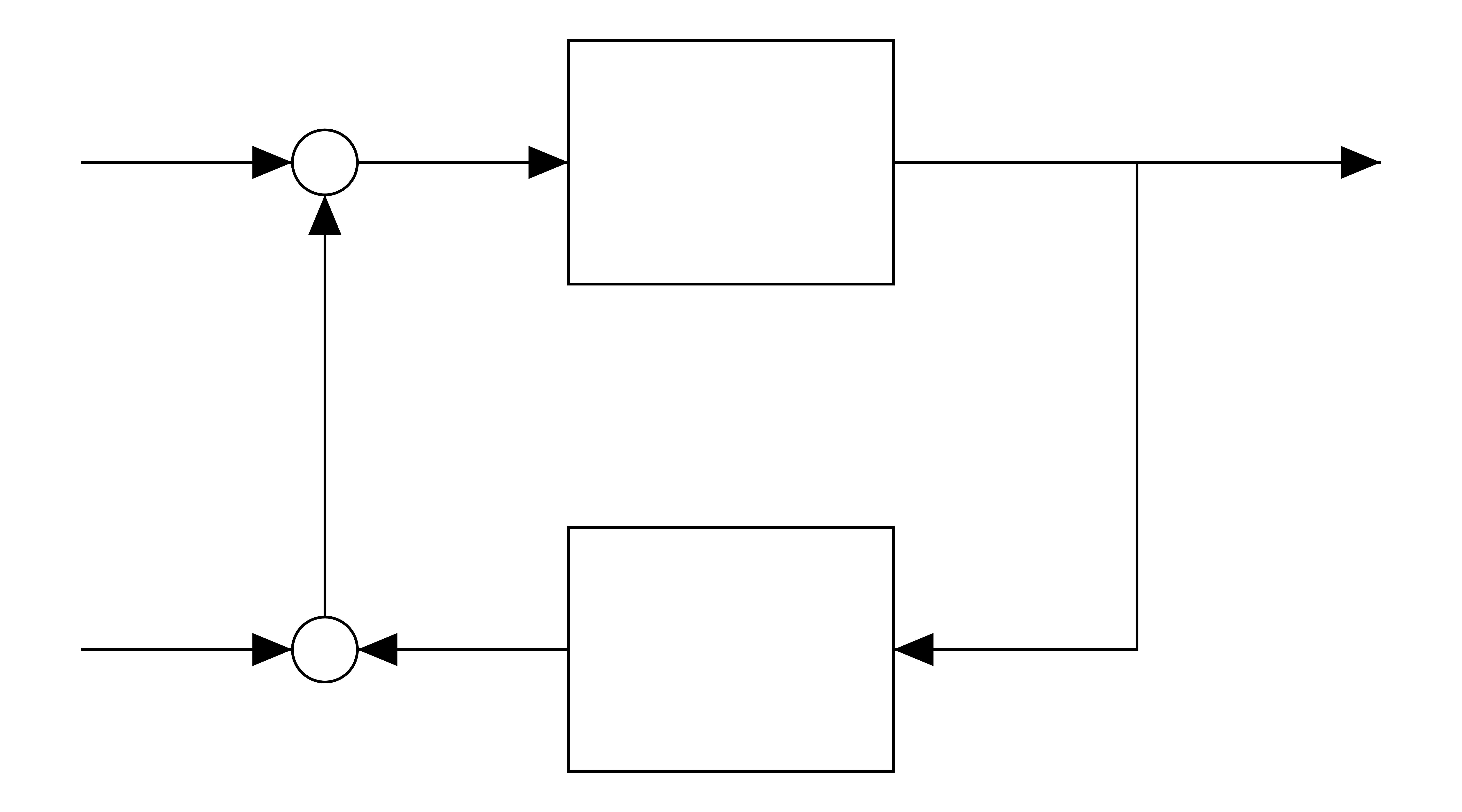}
      \put(87,39){\small$\beta$}
      \put(-1,39){\small$\wu$}
      \put(-1,9){\small$q$}
      \put(21,25){\vcent{\hbox{$c$}}}
      \put(28,41.5){\clap{$w$}}
      \put(45,10){\clap{\vcent{\hbox{$K$}}}}
      \put(45,40){\clap{\vcent{\hbox{$G$}}}}
    \end{overpic}}}
  \precapspace
  \caption{Feedback block diagram}
  \label{fig:blockdiag}
\end{figure}

The controller interconnection is illustrated in
Figure~\ref{fig:blockdiag}.  The system model $G$ maps frequency
$\omega$ to buffer occupancy $\beta$, and the controller $K$ maps
$\beta$ to correction~$c$ minus the offset $q$.

\subsection{Model}

The model for the closed-loop system is described in vector form
as follows:
\begin{equation}
  \label{eqn:mod1}
  \begin{aligned}
    \dot\theta &= \wu + c \\
    \beta &= B^\tp \theta + \lambda \\
    c &= kD (\beta - \beta^\text{off}) + q
  \end{aligned}
\end{equation}
Here $\beta,\lambda \in\R^m$ and $\theta, c, q\in\R^n$. It's
convenient to write this as
\begin{equation}
  \label{eqn:bit}
  \begin{aligned}
    \dot\theta &= A\theta + \wu + q + r \\
    \beta &= B^\tp \theta + \ugn \\
    c &= A \theta + q + r
  \end{aligned}
\end{equation}
where
\begin{equation}
  \label{eqn:Ar}
  A = kDB^\tp \qquad r = kD(\ugn - \beta^\text{off})
\end{equation}
Note that the matrix $A$ is not Hurwitz, and so $\theta$ does not
converge.

When the system is booted up, the
offsets are chosen to be \emph{feasible}, that is we set
$\beta^\text{off} = \beta(t^0)$ at some time $t^0$. This has the following consequence.

\begin{lemma}
  \label{lem:feasible}
  Suppose $\beta^\text{off}$ is feasible, that is, there exists $t^0$ such that
  \[
  \beta^\text{off}(t^0) = B^\tp \theta(t^0) + \lambda
  \]
  Let $r$ be given by equation~\eqref{eqn:Ar}. Then $r\in\range(A)$.
\end{lemma}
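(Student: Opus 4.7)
The plan is a direct substitution: unpack the definitions of $r$ and $A$, use the feasibility hypothesis to eliminate $\lambda - \beta^\text{off}$, and exhibit an explicit preimage of $r$ under $A$.

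First I would rewrite the feasibility condition as $\lambda - \beta^\text{off} = -B^\tp \theta(t^0)$. Substituting this into the definition $r = kD(\lambda - \beta^\text{off})$ from equation~\eqref{eqn:Ar} gives $r = -kDB^\tp \theta(t^0)$. Since $A = kDB^\tp$ by the same equation, this is exactly $r = -A\theta(t^0) = A(-\theta(t^0))$, which puts $r$ in $\range(A)$ with witness $-\theta(t^0)$.

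There is essentially no obstacle: the statement is a one-line algebraic consequence of the definitions. The only thing worth being careful about is the sign convention in $\beta = B^\tp\theta + \lambda$ (with $B = S - D$), to make sure the feasibility substitution yields $-B^\tp\theta(t^0)$ rather than $+B^\tp\theta(t^0)$, but this matches the equation displayed in the lemma statement. No properties of $k$, $D$, $B$, irreducibility, or the graph structure are needed for this particular claim; the lemma is purely a consistency observation about the constant $r$ that arises when the controller is initialized at a feasible offset.
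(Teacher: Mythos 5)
Your proof is correct and is essentially identical to the paper's: both substitute the feasibility condition $\lambda - \beta^\text{off} = -B^\tp\theta(t^0)$ into $r = kD(\lambda - \beta^\text{off})$ to obtain $r = -A\theta(t^0) \in \range(A)$.
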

\begin{proof}
  This holds because
  $  r = kD(\lambda - \beta^\text{off})$ and so $r = -kDB^\tp \theta(t^0) = - A\theta(t^0)$.
\end{proof}
We assume the graph is irreducible. Then the matrix~$A$ is a
scaled directed Laplacian matrix for the graph. It is an irreducible
rate matrix.  The following result is standard.

\begin{lemma}
  \label{lem:W}
  Suppose $A$ is an irreducible rate matrix, and let $z>0$ be it's
  Metzler eigenvector, normalized so that $\one^\tp z = 1$.  Then
  \[
  \lim_{t \to \infty} e^{At} = \one z^\tp
  \]
\end{lemma}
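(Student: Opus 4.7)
The plan is to exploit the spectral structure of an irreducible rate matrix guaranteed by the Metzler eigenvalue theorem recalled in the preliminaries. First I would verify that $0$ is precisely the dominant Metzler eigenvalue with simple algebraic multiplicity, that $\one$ is the corresponding positive right eigenvector, and that $z$ is the corresponding positive left eigenvector. Since $A$ is a rate matrix, $A\one = 0$, so $0$ is an eigenvalue admitting the positive right eigenvector $\one$; by the Metzler theorem cited earlier this forces $\lambda_\text{metzler} = 0$, and $z$ must then satisfy $z^\tp A = 0$ with $\one^\tp z = 1$ by normalization. All remaining eigenvalues of $A$ have strictly negative real part.

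Next I would decompose $\R^n$ into $A$-invariant subspaces. Let $V = \{v \in \R^n : z^\tp v = 0\}$, an $(n-1)$-dimensional subspace that is $A$-invariant because $z^\tp A v = 0$ follows directly from $z^\tp A = 0$. Since $z^\tp \one = 1 \neq 0$, the decomposition $\R^n = \textrm{span}(\one) \oplus V$ is direct, and any $x \in \R^n$ can be written uniquely as $x = (z^\tp x)\one + w$ with $w = x - (z^\tp x)\one \in V$.

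The third step is to show that $e^{At}$ acts trivially on $\textrm{span}(\one)$ and contractively on $V$. For the first part, $A\one = 0$ immediately gives $e^{At}\one = \one$ for all $t$. For the second, simplicity of the $0$ eigenvalue implies that $V$ coincides with the direct sum of the generalized eigenspaces corresponding to the remaining eigenvalues, all of which have strictly negative real part; hence $A|_V$ is Hurwitz and $e^{At}v \to 0$ for every $v \in V$. Combining the two contributions, $e^{At}x \to (z^\tp x)\one = (\one z^\tp)x$ for every $x$, which is the claimed identity.

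The main obstacle is justifying the identification of $V$ with the direct sum of the generalized eigenspaces for the nonzero eigenvalues; this rests on $0$ being a simple eigenvalue of $A$, which in turn is the Perron--Frobenius content of the Metzler theorem for irreducible Metzler matrices. Once simplicity is in hand, uniqueness of the $A$-invariant complement of the $0$-eigenspace, together with the spectral bound on $A|_V$, makes the exponential decay step routine.
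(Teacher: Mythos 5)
Your proof is correct and follows essentially the same route as the paper: both isolate the simple zero (Metzler) eigenvalue with right eigenvector $\one$ and left eigenvector $z$, split $\R^n$ into $\mathrm{span}(\one)$ plus the complementary $A$-invariant subspace on which $A$ is Hurwitz, and let the exponential decay kill the second piece. The paper phrases this via an explicit eigendecomposition $e^{At}=T\,\mathrm{diag}(1,e^{\Lambda t})\,T^{-1}$ while you use the invariant subspace $\{v : z^\tp v=0\}$ directly, but this is only a cosmetic difference.
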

\begin{proof}
  Since $A$ is irreducible, the Metzler eigenvalue, which is zero, has
  multiplicity one. Let the eigendecomposition of $A$ be $  AT = TD$.
  Then we have
  \[
  D = \bmat{0 & 0 \\ 0 & \Lambda}
  \quad
  T = \bmat{\one & T_2}
  \quad
  T^{-1} = \bmat{z^\tp \\ V_2^\tp}
  \]
  for appropriate matrices $T_2$, $V_2$, and $\Lambda$.
  The other eigenvalues of $A$ all have negative real part.
  Since $T^{-1}T=I$ we have $ \one^\tp z = 1$.
  Then
  \[
  e^{At} = T
  \bmat{1 & 0 \\ 0 &   e^{\Lambda t}} T^{-1}
  \]
  Taking the limit gives the result.
\end{proof}

We denote by $z>0$ the Metzler left-eigenvector of $A$, normalized so
that $\one^\tp z= 1$, and let $W = \one z^\tp$.  The matrix $W$ is
called the spectral projector of the Metzler eigenvalue. It satisfies
$W^2 = W$ and in addition $WA = AW = 0$.

\section{Approach}

There are many important practical requirements that the controller
must meet, which are discussed in~\cite{bms}.  In this
paper, we focus on two critical requirements. The first is that the
controller must ensure that the frequency of all nodes converges
to the same value,
that is, for some $\bar\omega>0$, we have
\[
\lim_{t\to\infty} \dot\theta_i(t) = \bar\omega \quad
\text{for all }i
\]
The second requirement is that, after some initial startup time $T$,
the buffer occupancies remain close to the offset, that is,
$\abs{\beta_i(t) - \beta^\text{off}_i}$ should be small for all $i$
and all $t > T$.  Both of these requirements are specifications of
allowed steady-state behavior.

One approach to control this system is to use an
approximate proportional-integral (PI) control, as discussed
in~\cite{res,bms}.  Even in situations where the effects of the
approximation are small, there are potential disadvantages to the PI
controller.  One is that the PI controller contains the integral
state, which the controller may need to set carefully when a node
starts up and when neighboring nodes fail. In the distributed setting
of \bittide, one of the design tenets is to avoid in-band signaling for
controlling synchronization, and thus we have no mechanism to exchange
such information.  Appropriate choice of the integral
gain may be affected by the underlying network topology, link rates,
and link latencies.

Another alternative method to proportional-integral control is simply using a
very large gain $k$. Very large gains have negative consequences for feedback
system behavior in several well-known ways. In particular for bittide, this
would adversely affect delay robustness and response to quantization noise,
both of which are important in this setting.  We therefore develop an
alternative approach in this paper.

\paragraph{Reframing control.}
We give here for convenience a brief summary of the technical approach,
which is detailed precisely in Section~\ref{sec:main}.
Our approach is to make use of the offset term $q$ in the
proportional-plus-offset controller. At each node, the local
controller sets $q = 0$ initially. The system frequency $\omega$ will
then converge, so that all nodes have the same frequency, a weighted
average of $\wu$. This is shown in Lemma~\ref{lem:freqconv}.

The buffer occupancy also converges, as shown in
Lemma~\ref{lem:betaconv}.  Typically the buffer occupancy will not
converge to the buffer offset, because a nonzero correction $c$ given
by equation~\eqref{eqn:css} is necessary to maintain frequency
equilibrium, and with $q=0$ the controller~\eqref{eqn:pcontroller} can
only achieve this with a nonzero relative occupancy~$\beta - \beta^\text{off}$.

After this initial period of convergence, the correction has converged
to a steady-state value $c^\text{ss}$. The controller now performs a
\emph{reframing}; it sets the offset $q$ in the controller to
$c^\text{ss}$. The control signal emitted by the controller is now not
an equilibrium solution, and so the system will need to reconverge to
a new equilibrium. The key point here is that the new equilibrium is
at the same \emph{frequency}, but a different \emph{buffer occupancy}.
We show in Lemma~\ref{lem:cconv} that, after the reframing, the
frequency converges to the same weighted equilibrium value as before.
Furthermore, Lemma~\ref{lem:betaconv2} shows that after this second
phase, the buffer occupancy $\beta(t)$ converges to the midpoint
$\beta^\text{off}$.

By using this two-phase approach, we can therefore achieve \emph{both}
controller requirements. The steady-state frequency is exactly the
same as that achieved by the proportional controller, and the final
buffer occupancy is at the desired offset point.

\section{Main results}
\label{sec:main}

We first analyze simple convergence. The clock phase $\theta$ does not
converge, so there is no steady-state value for it.  However, the
frequency $\omega(t) = \dot\theta(t)$ does converge.
We will make use of the following simple property.
\begin{lemma}
  \label{lem:conv}
  Let $A$ be an irreducible rate matrix, and $\dot \theta(t) = A
  \theta(t) + v$.  Let $W$ be the spectral projector corresponding to
  the Metzler eigenvalue. Then
  \[
  \lim_{t \to \infty} A \theta(t) =  (W-I) v
  \]
  for any initial conditions $\theta(0)$.
\end{lemma}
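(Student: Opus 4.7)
The plan is to apply variation of constants and then multiply through by $A$, exploiting the fact that although $e^{At}$ does not vanish, its product with $A$ does.

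First, I would write the explicit solution of the ODE as
\[
\theta(t) = e^{At}\theta(0) + \int_0^t e^{A(t-s)} v \, ds.
\]
Then I would left-multiply by $A$ and use the identity $A e^{A(t-s)} = -\frac{d}{ds} e^{A(t-s)}$, which turns the integral into a telescoping term:
\[
A\theta(t) = A e^{At}\theta(0) + \bl e^{At} - I\br v.
\]

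Next I would pass to the limit $t \to \infty$. By Lemma~\ref{lem:W}, $e^{At} \to W$. For the first term, I would use that $A$ and $e^{At}$ commute (both are polynomials/exponentials in $A$), so $A e^{At} = e^{At} A \to W A = 0$ since $W A = 0$ was noted just after Lemma~\ref{lem:W}. The initial-condition contribution therefore vanishes, and the second term converges to $(W - I)v$, giving the claim.

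There is no real obstacle: the only subtle point is recognizing that although $\theta(t)$ itself may grow (because $A$ has a zero eigenvalue and $v$ need not lie in $\range(A)$), the projection onto the non-null directions that $A$ sees does converge. The key algebraic trick is the telescoping identity for $\int_0^t A e^{A(t-s)} v\,ds$, which cleanly isolates the two limiting behaviors.
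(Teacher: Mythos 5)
Your proof is correct and follows essentially the same route as the paper: write the variation-of-constants solution, left-multiply by $A$ to get $A\theta(t) = (e^{At}-I)v + Ae^{At}\theta(0)$, and take the limit using $e^{At}\to W$ and $WA=0$. The paper's proof is just a terser version of exactly this argument.
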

\begin{proof}
  We have
  \[
  \theta(t) = \int_0^t e^{A(t-s)}v \, ds  + e^{At} \theta(0)
  \]
  and hence
  \[
  A\theta(t) = (e^{At}-I) v + A e^{At} \theta(0)
  \]
  Taking the limit gives the desired result.
\end{proof}
\noindent
Define for convenience the function $F:\R^n \to \R^n$ by
\[
F(q) =  (W-I) \wu + W(q+r)
\]
This function maps the controller offset $q$ to the steady-state
correction, as follows.
\begin{lemma}
  \label{lem:freqconv}
  Consider the dynamics of~\eqref{eqn:bit}.
  For any initial $\theta(0)$ and any $q\in\R^n$ we have
  \begin{equation}
    \label{eqn:css}
    \lim_{t \to \infty} c(t) = F(q)
  \end{equation}
\end{lemma}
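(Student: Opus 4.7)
The plan is to reduce the claim directly to Lemma~\ref{lem:conv}, since the dynamics in~\eqref{eqn:bit} already have the form required by that lemma with the inhomogeneous term being the constant vector $\wu + q + r$ (note that $q$ is held constant on the relevant time interval).

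First, I would apply Lemma~\ref{lem:conv} to the equation $\dot\theta = A\theta + (\wu + q + r)$, taking $v = \wu + q + r$. This gives
\[
\lim_{t \to \infty} A \theta(t) = (W - I)(\wu + q + r)
\]
for any initial condition $\theta(0)$.

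Next, since the third equation of~\eqref{eqn:bit} gives $c(t) = A\theta(t) + q + r$, I would take limits on both sides and substitute the expression above:
\[
\lim_{t\to\infty} c(t) = (W-I)(\wu + q + r) + (q + r).
\]
The final step is a one-line algebraic simplification: expanding and regrouping,
\[
(W-I)(\wu + q + r) + (q+r) = (W-I)\wu + W(q+r),
\]
which is exactly $F(q)$.

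There is no real obstacle here; the lemma is essentially a direct corollary of Lemma~\ref{lem:conv} combined with the static definition of $c$ in terms of $\theta$. The only mild care needed is to treat $q$ as constant over the relevant time horizon (which is the regime being analyzed, since reframing only changes $q$ at discrete reset events) so that the $\theta$-dynamics are time-invariant with a constant forcing term and Lemma~\ref{lem:conv} applies verbatim.
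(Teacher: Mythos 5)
Your proof is correct and follows exactly the route the paper intends: the paper's own proof is the one-line remark that the lemma is ``a direct consequence of Lemma~\ref{lem:conv},'' and you have simply filled in the details of that reduction (applying Lemma~\ref{lem:conv} with $v = \wu + q + r$ and then using $c = A\theta + q + r$ and the identity $(W-I)(\wu+q+r) + (q+r) = (W-I)\wu + W(q+r)$). Nothing further is needed.
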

\begin{proof}
  This is a direct consequence of Lemma~\ref{lem:conv}.
\end{proof}
The frequency of the system is $\omega(t) =  c(t) + \wu$ which
gives steady-state frequency
\[
\omega^\text{ss} = \lim_{t \to \infty} \omega(t) = W(q + r + \wu)
\]
Hence, before reframing, the frequency converges. The frequency that
the system converges to meets the first performance requirement of
\bittide, since we have $W=\one z^\tp$ and therefore we must have all
components of $\omega^\ss$ equal. Since $\one^\tp z = 1$ the
steady-state frequency is a convex combination of the entries of $q+r+\wu$.  From
Lemma~\ref{lem:feasible}, we have $r\in\range(A)$. Then
\[
\omega^\text{ss} = W(q + \wu)
\]
since $WA=0$. Then we can see that a pure proportional controller
(\ie, $q=0$) results in all nodes of the \bittide system converging to a
steady-state frequency equal to the weighted average $z^\tp \wu$, that is
\begin{equation}
  \label{eqn:wbefore}
  \omega^\text{ss} = W \wu = \one z^\tp \wu
\end{equation}

The steady-state frequency is affected by~$q$, and so even though this
frequency meets the \bittide requirement, after reframing we will
adjust $q$, and so it appears in principle possible that the frequency
of the system could change. We will see below that, although at the
time of reframing the frequency of the system does change, it
nonetheless returns and converges to the same frequency.  An example
of this behavior is given in Figure~\ref{fig:freq}.  The first
\bittide requirement is satisfied; for any positive gain~$k$ the nodes
all converge to the same steady-state frequency.  We now turn
attention to the second requirement, ensuring that eventually $\beta$
stays close to the offset $\beta^\text{off}$. We first show that,
before reframing, the buffer occupancy converges.

\begin{lemma}
  \label{lem:betaconv}
  Consider the dynamics of~\eqref{eqn:bit}.
  For any initial $\theta(0)$ and any $q\in\R^n$, the buffer occupancy
  $\beta(t)$ converges as $t \to \infty$.
\end{lemma}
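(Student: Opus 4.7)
The plan is to use the variation-of-parameters formula for $\theta(t)$, substitute into the definition of $\beta$, and then use the key fact that $B^\tp W = 0$ in order to cancel the non-decaying mode. The remaining terms involve only modes that decay exponentially, so the resulting integral converges.

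In detail, I would first set $v = \wu + q + r$, which is a constant vector, so that the phase dynamics in~\eqref{eqn:bit} become $\dot\theta = A\theta + v$. By variation of parameters,
\[
\theta(t) = e^{At}\theta(0) + \int_0^t e^{A(t-s)} v\, ds.
\]
Substituting into $\beta(t) = B^\tp \theta(t) + \ugn$ gives
\[
\beta(t) = B^\tp e^{At}\theta(0) + B^\tp \int_0^t e^{A\tau} v\, d\tau + \ugn.
\]

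Next I would exploit the spectral structure from Lemma~\ref{lem:W}. Since $W = \one z^\tp$ and $B^\tp \one = 0$, we have $B^\tp W = 0$. Using the eigendecomposition $A = T\bmat{0 & 0\\0 & \Lambda}T^{-1}$ with $\Lambda$ Hurwitz, write
\[
e^{At} = W + M(t), \quad \text{where} \quad M(t) = T\bmat{0 & 0\\0 & e^{\Lambda t}}T^{-1}.
\]
Then $B^\tp e^{At} = B^\tp M(t)$, and since every entry of $M(t)$ decays exponentially, $B^\tp e^{At}\theta(0) \to 0$. For the integral, use $e^{A\tau} v = W v + M(\tau) v$ and observe that $B^\tp W v = 0$, so
\[
B^\tp \int_0^t e^{A\tau} v\, d\tau = B^\tp \int_0^t M(\tau) v\, d\tau.
\]
Because $M(\tau)$ is entrywise dominated by a decaying exponential $C e^{-\alpha \tau}$ for some $\alpha > 0$, the improper integral $\int_0^\infty M(\tau) v\, d\tau$ converges. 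Thus $\beta(t)$ converges as $t \to \infty$.

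The only real obstacle is cleanly isolating the non-decaying mode of $e^{At}$ and observing that $B^\tp$ annihilates it; once this observation is in hand, everything else is a standard exponential-decay estimate using Lemma~\ref{lem:W}. No assumption on feasibility of $\beta^\text{off}$ is needed here, since we are only claiming convergence of $\beta(t)$, not convergence to $\beta^\text{off}$.
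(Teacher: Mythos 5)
Your proof is correct and follows essentially the same route as the paper's: both decompose $e^{At}$ into the spectral projector $W$ plus an exponentially decaying part via the eigendecomposition of Lemma~\ref{lem:W}, and both use $B^\tp \one = 0$ (equivalently $B^\tp W = 0$) to annihilate the non-decaying mode so that only convergent terms remain. Your observation that feasibility of $\beta^\text{off}$ is not needed here also matches the paper, which imposes no such hypothesis in this lemma.
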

\begin{proofnosymbol}
  Using the same notation as in the proof of Lemma~\ref{lem:W}, we have
  \[
  \int_0^t e^{A(t-s)}\, ds = T
  \bmat{t & 0 \\ 0  & \Lambda^{-1} (e^{\Lambda t} - I)} T^{-1}
  \]
  Then
  \begin{align*}
    \beta(t) &= \lambda + B^\tp \theta(t)  \\
    &= \lambda + B^\tp T
    \bmat{0 & 0 \\ 0  & \Lambda^{-1} (e^{\Lambda t} - I)} T^{-1} (\wu + q + r)
    \\
    & \qquad\qquad
    + B^\tp T \bmat{ 0 & 0 \\ 0 & e^{\Lambda t}} T^{-1} \theta(t)
  \end{align*}
  where we have used the fact that the first column of $T$ is~$\one$
  and $B^\tp \one = 0$. Then
  \[
  \lim_{t \to \infty} \beta(t)
  = \lambda - T_2 \Lambda^{-1} V_2^\tp (\wu + q + r)
  \tag*{\qedsym}
  \]
\end{proofnosymbol}

\begin{figure}[ht!]
  \centerline{\begin{overpic}[width=0.25\linewidth]{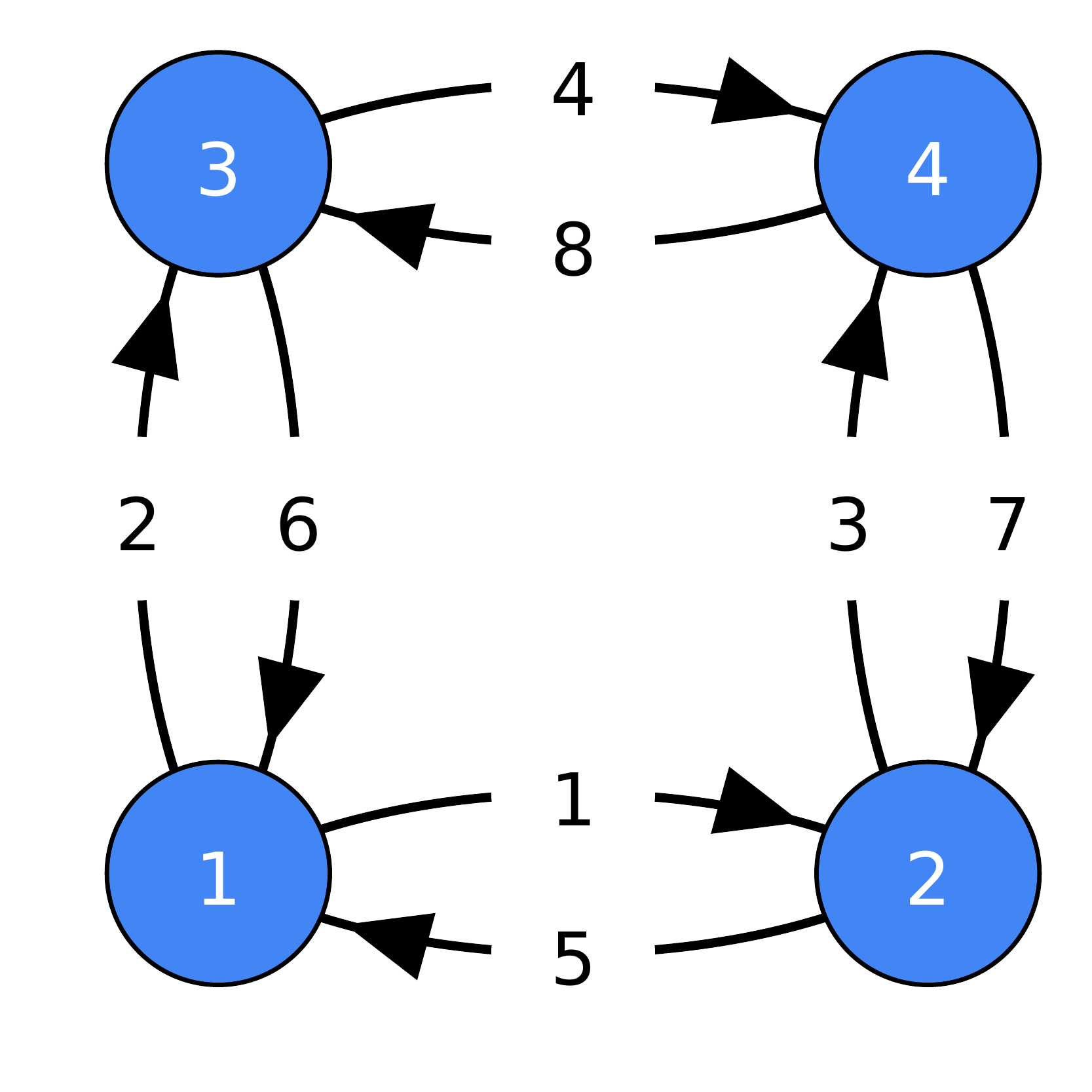}\end{overpic}}
  \precapspace
  \captionsetup{margin=30pt}
  \caption{Graph used to generate the simulations of Figures~\ref{fig:freq}
    and~\ref{fig:mocc}.}
  \label{fig:graph}
\end{figure}

\begin{figure}[ht!]
  \centerline{\begin{overpic}[width=0.98\linewidth]{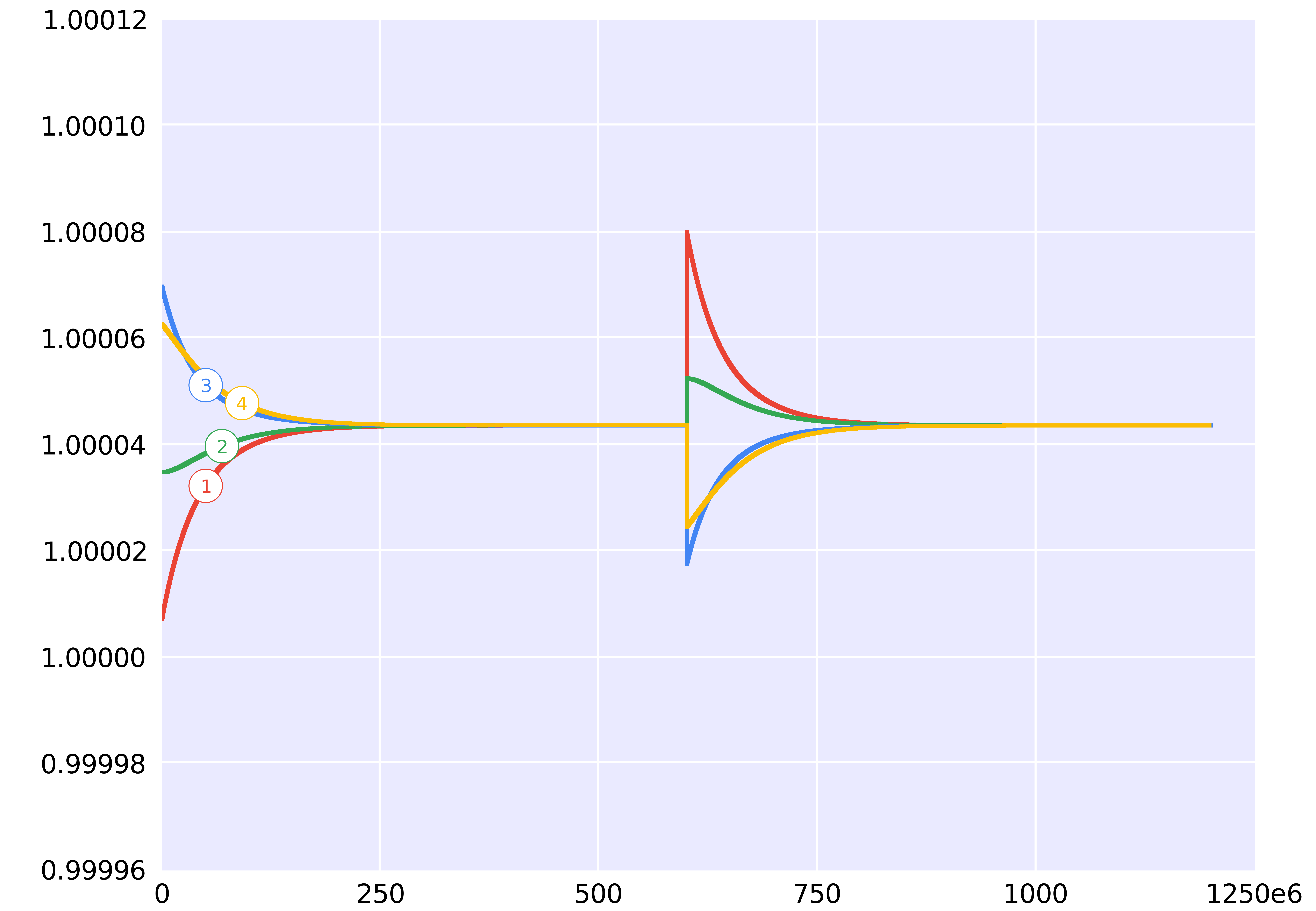}
      \put(13,64){\small$\omega$}
      \put(93,4){\small$t$}
    \end{overpic}}
  \precapspace
  \caption{Frequency behavior as a function of time. The reframing occurs
    at about $t=600\text{ns}$. Simulations were performed using the \texttt{Callisto}~\cite{callisto}
    simulator with a detailed frame-accurate model of \bittide. Traces on the graph are labeled
  with the corresponding node number.}
  \label{fig:freq}
\end{figure}

\begin{figure}[ht!]
  \centerline{\begin{overpic}[width=0.98\linewidth]{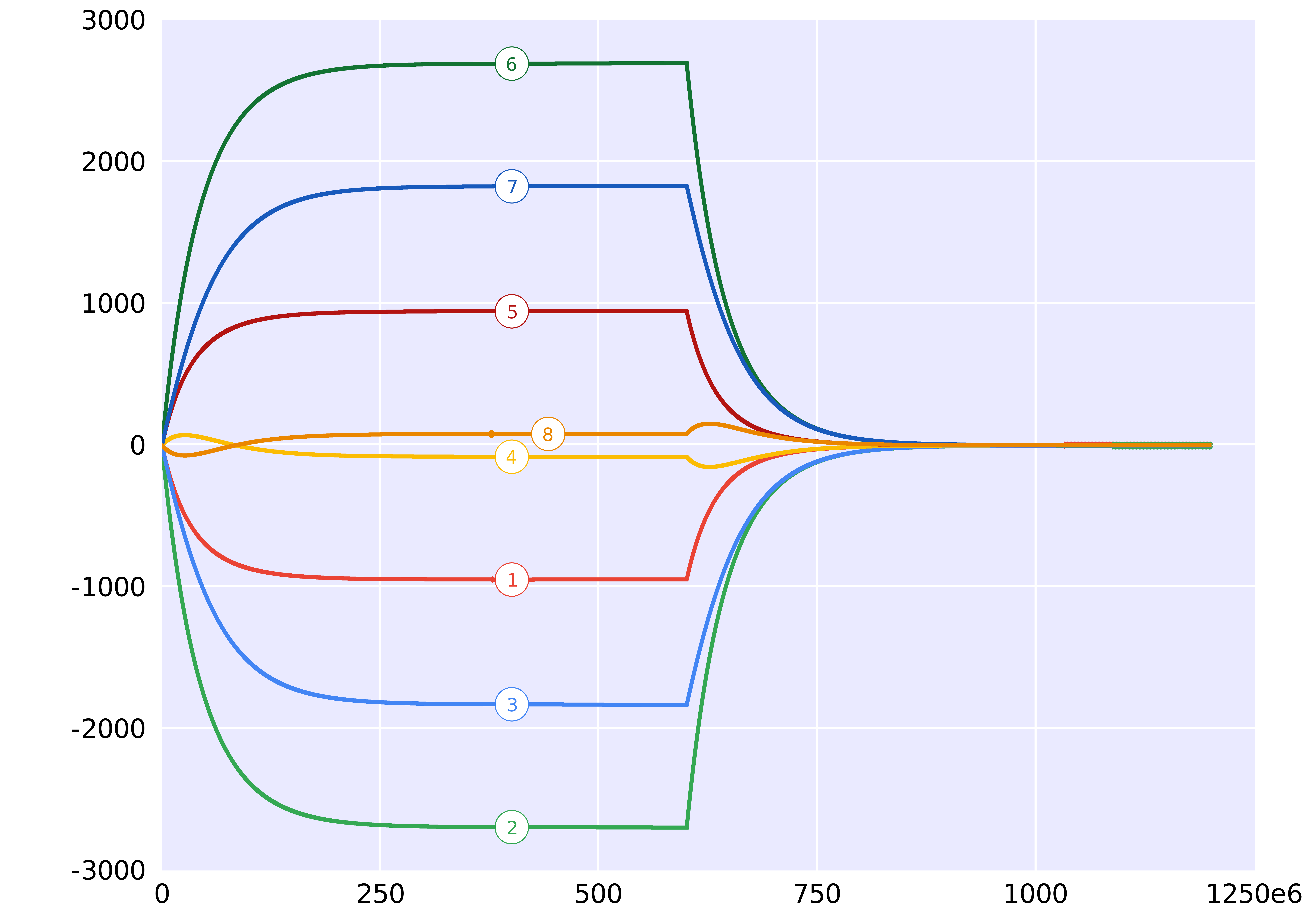}
      \put(13,64){\small$\beta - \beta^\text{off}$}
      \put(93,4){\small$t$}
    \end{overpic}}
  \precapspace
  \caption{Relative buffer occupancies $\beta - \beta^\text{off}$ for the same
  simulation as in Figure~\ref{fig:freq}. Traces on the graph are labeled
  with the corresponding edge number.}
  \label{fig:mocc}
\end{figure}

We now turn to the reframing. The controller runs a proportional
controller for some amount of time $T_1$, long enough to ensure that,
in practice, the frequency and the buffer offsets have
converged. After that time, the controller changes to using a non-zero
offset, which is simply equal to the converged value of the
correction. This controller is stated formally below.

\begin{defn}
  We define the \eemph{reframing controller} as follows.  For some
  $T_1>0$, let the correction be
  \[
  c(t) = \begin{cases}
    kD (\beta(t) - \beta^\text{off})  & \text{for } t \leq T_1 \\
    kD (\beta(t) - \beta^\text{off}) + kD (\beta(T_1) - \beta^\text{off})
     & \text{otherwise}
  \end{cases}
  \]
\end{defn}

Now we show the desired frequency convergence property. After
reframing, the controller converges to the same frequency as that
before reframing, in equation~\eqref{eqn:wbefore}.
\begin{lemma}
  \label{lem:cconv}
  Suppose $\beta^\text{off}$ is feasible.
  Using the reframing controller, as $T_1 \to \infty$
  and $t \to  \infty$, the frequency converges
  \[
  \omega(t) \to W \omega^u
  \]
\end{lemma}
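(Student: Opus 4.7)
The plan is to split the analysis into the two phases dictated by the reframing controller and apply Lemma~\ref{lem:freqconv} once in each phase. During the pre-reframing phase ($t \le T_1$) the controller offset is $q = 0$, so the pre-reframing correction at time $T_1$ is $c(T_1) = kD(\beta(T_1) - \beta^\text{off})$. After reframing ($t > T_1$), the closed-loop system has the same form as in~\eqref{eqn:bit} but with the constant offset $q = c(T_1)$ and initial condition $\theta(T_1)$, so Lemma~\ref{lem:freqconv} applies directly in each phase.

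In the pre-reframing phase, Lemma~\ref{lem:freqconv} gives $\lim_{t \to \infty} c(t) = F(0) = (W-I)\wu + Wr$. Since $\beta^\text{off}$ is feasible, Lemma~\ref{lem:feasible} gives $r \in \range(A)$, and combined with $WA = 0$ we deduce $Wr = 0$. Hence $c(t) \to (W-I)\wu$ as $t \to \infty$, and in particular $c(T_1) \to (W-I)\wu$ as $T_1 \to \infty$.

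For the post-reframing phase, applying Lemma~\ref{lem:freqconv} with the fixed offset $q = c(T_1)$ yields $\lim_{t \to \infty} c(t) = F(c(T_1)) = (W-I)\wu + W c(T_1)$, again using $Wr = 0$. Now letting $T_1 \to \infty$: by the previous paragraph and the linearity (hence continuity) of $F$, we obtain $F(c(T_1)) \to (W-I)\wu + W(W-I)\wu$. Using $W^2 = W$, the right-hand side collapses to $(W-I)\wu$, and therefore $\omega(t) = \wu + c(t) \to W\wu$, as required.

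The main obstacle is handling the iterated limit cleanly: the post-reframing offset itself depends on $T_1$, so I would be explicit that we first take $t \to \infty$ with $T_1$ fixed (Lemma~\ref{lem:freqconv} gives this uniformly over all initial conditions and all $q$) and only then take $T_1 \to \infty$. Once that order is fixed, the rest is direct algebra with the spectral projector identities $W^2 = W$ and $WA = 0$, together with the feasibility identity $Wr = 0$.
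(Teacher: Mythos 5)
Your proof is correct and follows essentially the same route as the paper: apply Lemma~\ref{lem:freqconv} once per phase to get $\lim_{t\to\infty} c(t) = F(F(0))$ and simplify using $W^2 = W$ together with $Wr = 0$ from feasibility (you just invoke $Wr=0$ earlier in the algebra, and are somewhat more explicit about the order of the iterated limit, than the paper is). No substantive differences.
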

\begin{proof}
  Since we are considering both $T_1$ and $t$ large, we can evaluate
  convergence in two phases. In the first phase we have $q=0$ and so
  according to Lemma~\ref{lem:freqconv} we
  have $c(T_1) \to  F(0)$. The reframing controller is
  \[
  c(t) =  kD (\beta(t) - \beta^\text{off}) + c(T_1) \qquad\text{for } t > T_1
  \]
  We can therefore use Lemma~\ref{lem:freqconv} again, with $q=c(T_1)$, to give
  \begin{align*}
    \smash{\lim_{t \to \infty}} c(t)  &= F(F(0))\\
    &= F\bl(
    (W-I)\wu + Wr
    \br)\\
    &= (W-I) \wu + W( r + (W-I) \wu + Wr)
    \\
    &= (W-I)\wu + 2Wr \\
    &= (W-I)\wu
  \end{align*}
  where the last line holds since $\beta^\text{off}$ is feasible. Then since
  $\omega(t) = c(t) + \wu$ we have
  \[
  \lim_{t \to \infty} \omega(t)  = W \wu
  \]
  as desired.
\end{proof}

Finally, we turn to the critical requirement for \bittide, that the
buffer occupancies be kept close to the offset. The following result
shows that, after the reframing, buffer occupancies return to the
midpoint. This is illustrated by the simulation in
Figure~\ref{fig:mocc}.

\begin{lemma}
  \label{lem:betaconv2}
  Suppose $\beta^\text{off}$ is feasible.
  Using the reframing controller, as $T_1 \to \infty$
  and $t \to  \infty$, the buffer occupancy  converges
  \[
  \beta(t) \to \beta^\text{off}
  \]
\end{lemma}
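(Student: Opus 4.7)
My plan is to combine the two previously-established limit results with the feasibility hypothesis to pin down $\lim \beta(t)$ explicitly. First, I apply Lemma~\ref{lem:betaconv} to the post-reframing dynamics, which is again an instance of~\eqref{eqn:bit} but with constant offset $q = c(T_1)$. As $T_1 \to \infty$ this tends to $F(0) = (W-I)\wu + Wr = (W-I)\wu$, using that $Wr = 0$ (a consequence of $r\in\range(A)$ from Lemma~\ref{lem:feasible} together with $WA=0$). Hence $\beta(t)$ converges; let $\xi = \lim_{t\to\infty} B^\tp\theta(t)$, so that $\lim\beta = \xi + \lambda$ and $\xi\in\range(B^\tp)$ as the limit of vectors in that closed subspace.

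Next I identify $\xi$ by matching two expressions for $\lim \dot\theta(t)$. From Lemma~\ref{lem:cconv}, $\lim\omega(t) = W\wu$. From the closed-loop dynamics $\dot\theta = A\theta + \wu + q + r$, passing to the limit (with $A\theta = kDB^\tp\theta$ and $q \to F(0)$) and simplifying gives $\lim\dot\theta = kD\xi + W\wu + r$. Equating the two expressions and substituting $r = kD(\lambda - \beta^\text{off})$ produces
\[
D\bl(\xi - (\beta^\text{off} - \lambda)\br) = 0.
\]

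The main obstacle is to upgrade this $D$-annihilation to actual equality, since in general $D$ has a large null space. The key observation is that $\range(B^\tp) \cap \ker D = \{0\}$: if $B^\tp u \in \ker D$, then $Au = kDB^\tp u = 0$, and since $A$ is an irreducible rate matrix its null space is exactly the span of $\one$, whence $B^\tp u = \alpha B^\tp \one = 0$ (using $B^\tp\one = 0$). By feasibility, $\beta^\text{off} - \lambda = B^\tp\theta(t^0) \in \range(B^\tp)$, and $\xi$ lies in the same subspace, so their difference is in this intersection and must vanish. Therefore $\xi = \beta^\text{off} - \lambda$, i.e., $\beta(t) \to \beta^\text{off}$.
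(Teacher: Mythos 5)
Your proof is correct and follows essentially the same route as the paper's: both establish $D(\beta^\text{ss}-\beta^\text{off})=0$ from the steady-state frequency (you via matching expressions for $\lim\dot\theta$, the paper via $\lim c(t)=c(T_1)$, which is the same computation), and both then use feasibility to place $\beta^\text{ss}-\beta^\text{off}$ in $\range(B^\tp)$ and invoke the fact that the null space of the irreducible rate matrix $kDB^\tp$ is spanned by $\one$ (your $\range(B^\tp)\cap\ker D=\{0\}$ observation) to conclude. No gaps; the reorganization is cosmetic.
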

\begin{proof}
  Denote $\beta^\text{ss} = \lim_{t \to \infty} \beta(t)$.
  With the reframing controller, we have
  \[
  \lim_{t\to\infty}
  c(t) =
  kD(\beta^\text{ss} - \beta^\text{off}) + c(T_1)
  \]
  The proof of Lemma~\ref{lem:cconv} shows that
$
  \lim_{t \to \infty} c(t)  = c(T_1)
 $
  and therefore
  \begin{equation}
    \label{eqn:bfin}
    D(\beta^\text{ss} - \beta^\text{off})  = 0
  \end{equation}
  Now, since $r=0$, we have $\beta^\text{off} = B^\tp \theta(t^0) + \lambda$, and so
  \[
  \beta(t) - \beta^\text{off} = B^\tp (\theta(t) - \theta(t^0))
  \]
  hence for all $t$ we have $\beta(t) - \beta^\text{off} \in
  \range(B^\tp)$. Hence
  $\beta^\text{ss} - \beta^\text{off} = B^\tp x$ for some $x$, and
  so~\eqref{eqn:bfin} means that $DB^\tp x = 0$. Since $DB^\tp$ is an irreducible
  rate matrix, this means $x=\gamma\one$ for some $\gamma$, and hence $B^\tp x = 0$,
  from which we have $  \beta^\text{ss} - \beta^\text{off} = 0$ as desired.
\end{proof}

Lemmas~\ref{lem:cconv} and~\ref{lem:betaconv2} show that the reframing
methodology achieves the requirements. In practice, this works
because, before reframing, the buffers can overflow. This occurs
during system startup, when the network frames do not contain any
data, and so the system does not need to store the actual frames, and
instead can use counters or pointers to keep track of how many frames
\emph{would} be in the buffer. After the reframing, and subsequent
convergence, the \bittide system can begin executing code, at which
time it is essential that frames not be dropped. At this point, the
buffer occupancies have returned to a stable equilibrium at the
midpoint.

\section{Conclusions}

We proved that we can satisfy the requirements for controlling a \bittide
system and satisfy the desirable properties for buffer occupancy by developing
a dynamic control system that resets after convergence. The initial
proportional controller drives frequency convergence, and the proportional plus
offset controller ensures that buffer occupancies are driven toward the
desirable midpoint.

\section{Acknowledgments}

We thank Robert O'Callahan, Pouya Dormiani, Chase Hensel, and Chris
Pearce for all of their work on this project, and for much
collaboration and helpful discussion.

\prerefspace
\bibliographystyle{abbrv}

\end{document}